\definecolor{mygreen}{rgb}{0.21, 0.73, 0.13}
\newtheorem{theorem}{Theorem}
\newtheorem{lemma}[theorem]{Lemma}
\newcommand{\braakett}[2]{\langle\!\langle #1|#2\rangle\!\rangle}
\newcommand{\kett}[1]{|{#1}{\rangle\!\rangle}}
\newcommand{\braa}[1]{{\langle\!\langle}{#1}|}
\newcommand{\mc}[1]{\mathcal{#1}}
\newcommand{\ct}{{}^\dagger}
\newcommand{\tn}[1]{^{\otimes#1}}
\DeclarePairedDelimiter\parens{\lparen}{\rparen}
\newcommand{\ot}{\otimes}
\renewcommand{\epsilon}{\varepsilon}
\newcommand{\T}{\mathrm{T}}
\begin{document}
\title{Mutual information fluctuations and non-stabilizerness in random circuits}

\author{Arash Ahmadi}
\email{a.ahmadi-1@tudelft.nl}
\affiliation{QuTech and Kavli Institute of Nanoscience, Delft University of Technology, Delft, the Netherlands}
\author{Jonas Helsen}
\affiliation{QuSoft and CWI, Amsterdam, The Netherlands}
\author{Cagan Karaca}
\affiliation{QuTech and Kavli Institute of Nanoscience, Delft University of Technology, Delft, the Netherlands}
\author{Eliska Greplova}
\affiliation{QuTech and Kavli Institute of Nanoscience, Delft University of Technology, Delft, the Netherlands}

\begin{abstract}
The emergence of quantum technologies has brought much attention to the characterization of quantum resources as well as the classical simulatability of quantum processes. Quantum resources, as quantified by non-stabilizeress, have in one theoretical approach been linked to a family of entropic, monotonic functions. In this work, we demonstrate both analytically and numerically a simple relationship between non-stabilizerness and information scrambling using the fluctuations of an entropy-based quantifier. Specifically, we find that the non-stabilizerness generated by a random quantum circuit is proportional to fluctuations of mutual information. Furthermore, we explore the role of non-stabilizerness in measurement-induced entanglement phase transitions. We find that the fluctuations of mutual information decrease with increasing non-stabilizerness yielding potentially easier identification of the transition point. Our work establishes a key connection between quantum resource theory, information scrambling and measurement-induced entanglement phase transitions.
\end{abstract}

\maketitle

\section{Introduction} 
Since the formulation of quantum theory, entanglement has been known to be one of the unique signatures of quantum theory \cite{Einstein1935}. While entanglement is one of the key features of distinguishing the quantum world from the classical, it is known that entanglement alone is insufficient for having an advantage from the \emph{computational}  point of view. In particular, the stabilizer states are a set of states that can be highly entangled and at the same time can be simulated efficiently on classical computers \cite{gottesman1997stabilizer,Nielsen_Chuang_2010,Gottesman_fault_tolerant_1998,Aaronson_gottestman_2004,Gottesman_error_correction_1996,gottesman1998heisenberg}. 

The missing feature for quantum computation to have an advantage over classical computation is a concept known as non-stabilizerness or \emph{magic} \cite{Veitch_2014}. In quantum circuit language, in the fault-tolerant regime, the gates that are known to be "cheap" gates \cite{Veitch_2014} are Clifford gates, and magic is injected into the state by adding non-Clifford gates to the circuit. It is also known that the Clifford operations could have easier implementation both at the experimental level and for quantum error correction \cite{Gottesman_fault_tolerant_1998,gottesman1997stabilizer,shor1997faulttolerant,Beverland_2020}. This feature makes magic the resource for quantum computation.

Various measures for non-stabilizerness, or magic, have been proposed. Some of the notable ones are magical cross-entropy, mana \cite{Veitch_2014}, robustness of magic \cite{Howard_2017,hamaguchi2023handbook}, stabilizer entropies (SE) \cite{Leone_2022} and others \cite{Haug_scalable_2023,Liu_manybody_2022,Garcia_resource_2023,turkeshi2024magicspreadingrandomquantum,garcia2024hardnessmeasuringmagic}. Due to the favourable scaling of the stabilizer entropies these measures received much attention from the community to study the properties of non-stabilizerness of quantum systems \cite{leone2023phase,Rattacaso_2023,Salvatore_topology_2022,leone2023learning,leone_fidelity_2023,Oliviero_ising_2022,haug_frustration_2023}. Additionally, there has been a significant effort to scale up the computability of stabilizer entropies \cite{haug2023efficient,haug_mps_2023,tarabunga2024nonstabilizerness,lami_mps_2023,Lami_2024}. 

Diverse studies confirmed there exists a relationship between non-stabilizerness and information scrambling \cite{ahmadi2024quantifying,Garcia_resource_2023,Leone_2022}. Information scrambling describes the spread of local information in a generic quantum system \cite{Hosur_2016}. Information scrambling has been shown to be one of the most powerful measures for various quantum properties of quantum systems, from black holes \cite{Hayden_2007,Sekino_2008,Shenker_2014,Swingle_2018} to many-body quantum systems \cite{BASKO20061126} to quantum circuits \cite{Mi_OTOC_2021,ahmadi2024quantifying}. There are a number of known methods for measuring information scrambling. One of the well-known approaches is based on the measurement of correlator functions, namely Out-of-Time Ordered Correlators (OTOCs) \cite{Hayden_2007,Swingle_2018,Mi_OTOC_2021,khemani_operator_2018,ALEINER2016378,robert_diagnosing_2015}. Another approach for studying information scrambling are entropy-based measures \cite{Alba2019scrambling, touil2024information,monaco2023operational,nahum_entanglement_2017,Hamma_mutual_2016,Hosur_2016,Schnaack_2019,Pappalardi_scrambling_2018}. 

\begin{figure}
    \centering
    \includegraphics[width=\linewidth]{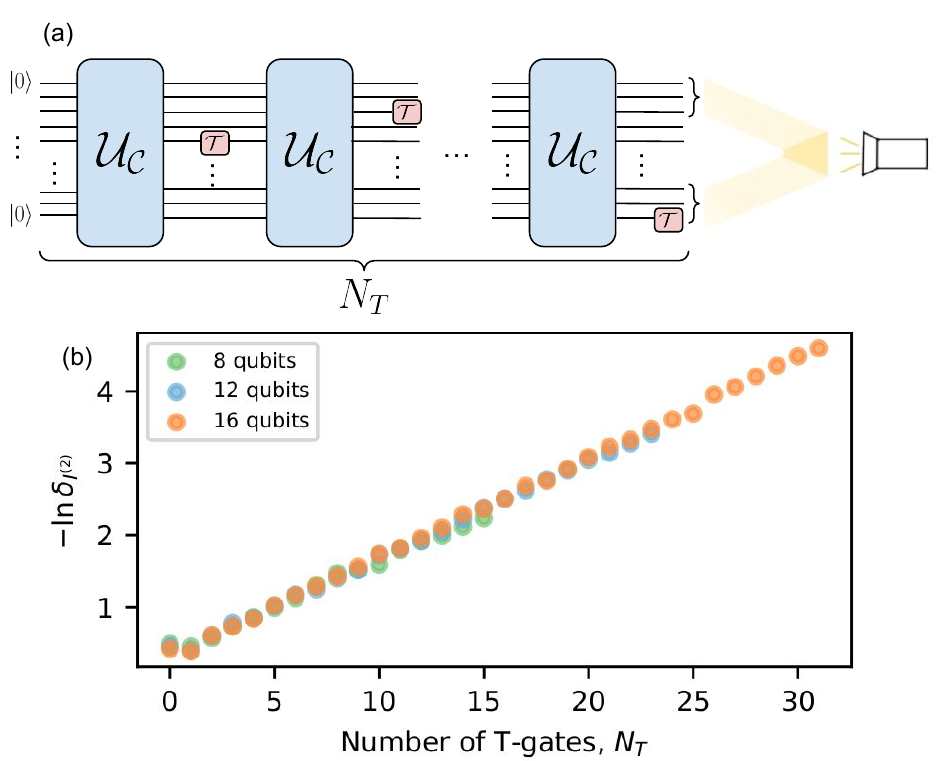}
    \caption{(a) The schematic structure of the t-doped Clifford circuits and the disjoint area of measuring scrambling. (b) The log of fluctuations of the mutual information, $-\ln
     \delta_{I^{(2)}}$ as a function of the number of T-gates, $N_T$, on the circuit for the number of qubits, $N=8,12,16$ and the number of samples is 500.}
    \label{fig:fig1}
\end{figure}

In this paper, we describe numerically and analytically a general relationship between non-stabilizerness and \emph{fluctuations} of entropy-based measures of information scrambling.
A specific instance of this relation has been observed earlier by connecting fluctuations of OTOCs to non-stabilizerness \cite{ahmadi2024quantifying,Leone_2022}. In the present work, we generalize this behaviour to fluctuations of generic mutual information measures of disjoint regions of the quantum circuit on t-doped circuits. 
The observed link between non-stabilizerness and mutual information creates a bridge to the theory of phase transitions in measurement-induced random quantum circuits, where mutual information plays a crucial role \cite{Li_2018,Li_2019,skinner_2019,Chan_2019,Jian_2020,Gullans_2020,Fisher_2023,Cao_2019,Bao_2020,Zabalo_2020,Tang_2020,Fuji_2020,Iaconis_2020,Lang_2020,Nahum_2021,Lavasani_Alavirad_Barkeshli_2021,Sang_2021,Lunt_2020,Szyniszewski_2020,Gullans_2021,Alberton_2021,Turkeshi_2020,Ippoliti_2021,Yaodong_2021,Fan_2021,Lavasani_2021,Regemortel_2021,Claeys_2022,Agrawal_2022,Block_2022,Noel_Niroula_Zhu_Risinger_Egan_Biswas_Cetina_Gorshkov_Gullans_Huse_etal._2022,Koh_Sun_Motta_Minnich_2023,Oshima_2023}. Here we show that non-stabilizerness is an ingredient in reducing the spread of entanglement phase transition in random quantum circuits and also reduces the fluctuations of the measured mutual information.

\section{Random circuits and mutual information}
\subsection{Definitions and notation}

The structure of the T-doped circuits that we use in this study consists of blocks of random Clifford operations followed by single T-gates on random qubits in the circuit. The blocks of random Clifford consist of single Clifford gates drawn randomly from the set $\{ I,X,Y,Z,H,S\}$ followed by three CNOT gates on two random qubits. We apply both single and double Clifford gates until the state gets fully scrambled (in our case after $2N$ operations, where $N$ is the number of qubits in the system). Each Clifford block (of depth $2N$) is followed by a T-gate performed on a randomly selected qubit. This Clifford block plus T-gate sequence is then repeated $N_T$ times. The structure of the circuit is shown in Fig. \ref{fig:fig1} (a).

A popular entropy-based measure for quantifying scrambling is mutual information \cite{Alba2019scrambling} of disjoint areas $A$ and $B$ which is defined as 

\begin{equation}
    I^{(2)} := S^{(2)}_A + S^{(2)}_B - S_{AB} ^{(2)},
    \label{eq:1}
\end{equation}
where $S^{(2)}_X$ is the Renyi-$2$ entropy defined as  $S^{(2)}_X \equiv - \log_2 \Tr \rho_X^2$ for the subsystem $X$ and $\rho_X$ is the reduced density matrix of subsystem X. We chose the subsystems $A$ and $B$ such that they have $N_A=N_B=N/4$ of the first and last qubits as shown in Fig. \ref{fig:fig1} (a).

\begin{figure}
    \centering
    \includegraphics[width=\linewidth]{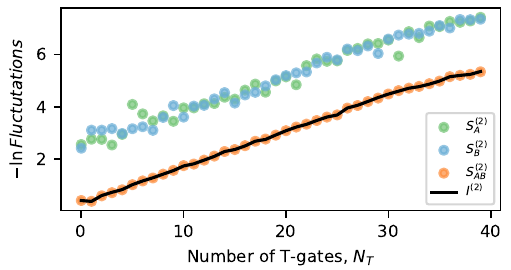}
    \caption{ The comparison of the $-\ln{fluctuations}$ of all terms in the mutual information definition, $S_A ^{(2)}$, $S_B ^{(2)}$, $S_{AB} ^{(2)}$, and the mutual information as a function of the number of T-gates, $N_T$ for a 16-qubit system.}
    \label{fig:fig2}
\end{figure}

\subsection{Relation of mutual information fluctuations and magic}
\label{sec:2B}

By measuring a sample of different instances of $I^{(2)}$ for a fixed number of T-gates, we observed a trend in the fluctuations, i.e. the standard deviation, of these instances
\begin{equation*}
    \delta_{I^{(2)}}= (\mathbb{E}[(I^{(2)})^2]-\mathbb{E}[I^{(2)}]^2)^{1/2},
\end{equation*}
 as a function of the number of T-gates, $N_T$, in the random circuit. We observe a linear relationship between the number of T-gates in the circuit and $-\ln \delta_{I^{(2)}}$ as shown in Fig. \ref{fig:fig1} (b). We have a circuit of $N=8,12,16$ qubits and the number of T-gates in each circuit is $N_T\in \{0,...,2N\}$ which is where magic increases linearly with the number of qubits \cite{ahmadi2024quantifying,Leone_2021}. We fit this linear behavior and obtain $-\ln \delta_{I^{(2}} \approx 0.13 N_T + 0.32$ dependency. Interestingly, this behaviour is similar to that previously observed for the fluctuations of out-of-time-order correlations~\cite{ahmadi2024quantifying}.

In order to interpret the results in Fig. \ref{fig:fig1} (b), we analyze Eq.~\ref{eq:1} term by term and asses the contribution of these terms to the observed fluctuation behaviour. Eq.~\ref{eq:1} have three terms of entanglement Renyi-$2$ entropy, where subsystems $A$ and $B$ have the size of $N/4$ and subsystem $AB$ has the size of $N/2$. By evaluating the fluctuations of $S_A ^{(2)}$, $S_B ^{(2)}$ and $S_{AB} ^{(2)}$ alongside the total fluctuations of the mutual information, we find that the fluctuations of $S_{AB} ^{(2)}$ are the leading term in behaviour of fluctuations of mutual information, $\delta_{I^{(2)}}$. This result is shown in Fig.~\ref{fig:fig2}. 

Specifically, in Fig. \ref{fig:fig2}, we show all contributions to the fluctuations of the mutual information for a 16-qubit system and observe the fluctuation of the mutual information overlaps with the fluctuations of the entanglement Renyi-$2$ entropy of the subsystem $AB$.

We also explored the effect of measuring the total spin of the subsystems $A$ and $B$ and their statistical relation to non-stabilizerness in Appendix \ref{sec:kurtosis}.

\begin{figure}
    \centering
    \includegraphics[width=\linewidth]{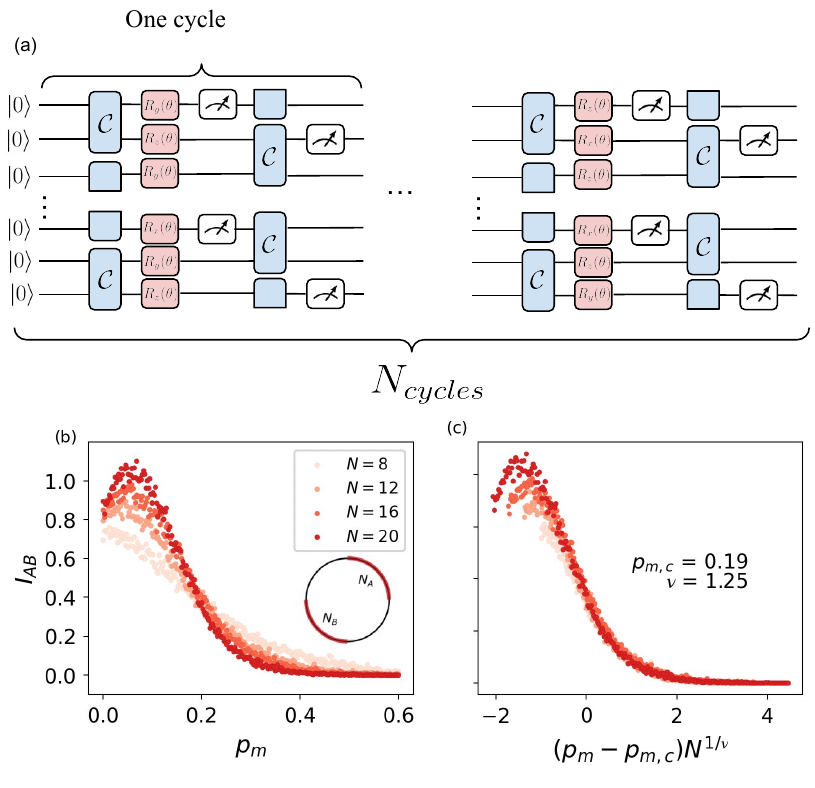}
    \caption{(a) The schematic structure of measurement-induced phase transition. The blue block shows the random two-qubit Clifford gate and the red single-qubit gates are random rotation gates followed by a projective measurement with probability $p_m$. The whole circuit consists of repeating each cycle $N_{cycle}$ times. (b) The mutual information of two disjoint partitions of the system, namely the first and third quarters ($N_B = N_B = N/4$), averaged over 800 instances and $N_{cycle}=125$ for different system sizes and the rotation angle, $\theta =0$. (c) The finite size scaling of the mutual information for the critical parameters $p_{m,c}=0.19$ and $\nu = 1.25$. }
    \label{fig:fig3}
\end{figure}

\subsection{Analytical relation of \texorpdfstring{$N_T$}{NT} and \texorpdfstring{$\ln(\delta_{I^{(2)}})$}{log(deltaI)}}
Due to the non-linear nature of the entropy, it is difficult to analytically recover the behaviour seen in Fig.~\ref{fig:fig1}. However if one instead averages \emph{inside} the logarithm exact calculation becomes tractable (this can be thought of as the first step towards a replica-trick calculation, which we will not attempt here) and we can explain the key features of the relation between magic and entropic fluctuations. Accordingly, we define the quantities
\begin{align*}
\tilde{S}_{AB}^{(2)} &:= -\log_2\big(\mathbb{E}\big[\Tr(\rho_{AB}^2)\big]\big),\\
\tilde{\delta} &:=\big(-\log_2\big(\mathbb{E}\big[\Tr(\rho_{AB}^2)^2\big]\big) -(\tilde{S}_{AB}^{(2)})^2\big),
\end{align*}
where the average is taken over the circuit set described in Fig.~\ref{fig:fig1}. A simple calculation shows that $\tilde{S}_{AB}^{(2)}$ is independent of $N_T$. On the other hand $\tilde{\delta}$ shows a clear linear dependence.  Suppressing various subleading contributions, we can prove the following relation:
\begin{equation}
-\ln(\tilde{\delta}) \approx N_T \ln(1/\lambda)
\label{eq:proof-eq}
\end{equation}
for $\lambda = 3/4$ and thus $\ln(1/\lambda) \approx  0.28$. This captures the linear behaviour seen in Fig.~\ref{fig:fig1} but does not give the correct rate. We expect that the correct rate can only be obtained through a full replica calculation, similar to the behaviour of the velocity of entanglement observed in \cite{zhou2019emergent}. The proof of Eq.~\eqref{eq:proof-eq} requires calculating the fourth moments of the Clifford group and is rather involved. We defer it to Appendix \ref{sec:analytics}{}. 
\begin{figure*}
    \centering
    \includegraphics[width=\textwidth]{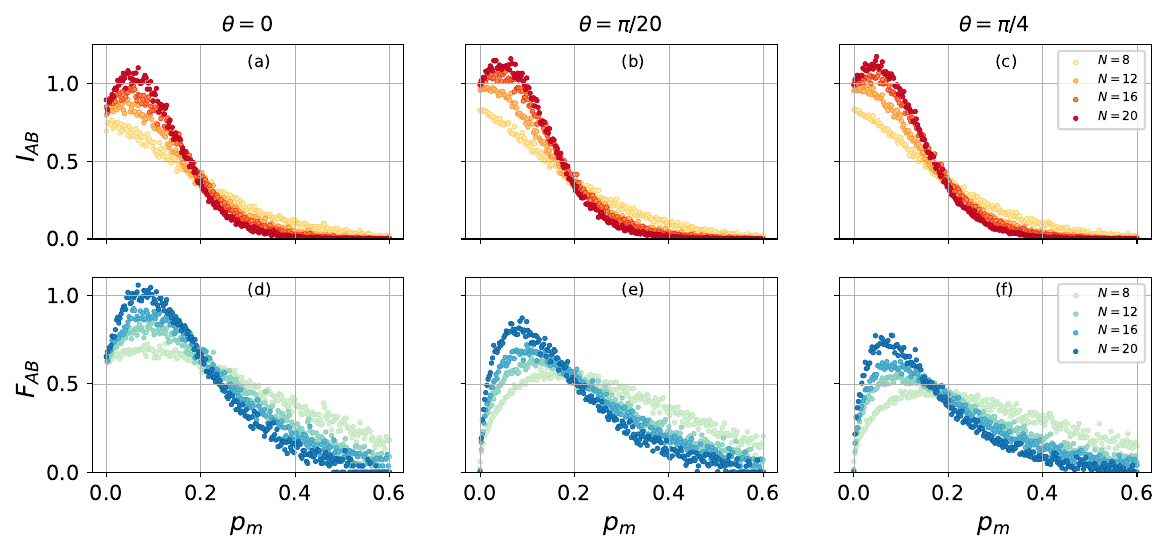}
    \caption{The upper panels correspond to averaged mutual information for different numbers of qubits as a function of measurement rate, $p_m$ for different levels of non-stabilizerness in the circuit, and the lower panels correspond to the fluctuations of the mutual information for the same system. The number of instances for averaging and fluctuations is 800 shots. 
    (a) The mutual information of measurement-induced circuit where the angle of rotation, $\theta$, in the rotation gates, is zero and the circuit is Clifford as a function of measurement rate, $p_m$. (b) The mutual information of the measurement-induced circuit where the angle of rotation, $\theta$, is $\pi/20$ and magic is at an intermediate level, as a function of measurement rate, $p_m$. (c) The mutual information of the measurement-induced circuit where the angle of rotation, $\theta$, is $\pi/4$ and magic is maximum per rotation gate, as a function of measurement rate, $p_m$. (d) The fluctuations of the mutual information for Clifford gates, as a function of measurement rate, $p_m$. (e) The fluctuations of mutual information where the angle of rotation, $\theta$ is $\pi/20$ as a function of measurement rate, $p_m$. (f) The fluctuations of mutual information where the angle of rotation, $\theta$ is $\pi/4$ as a function of measurement rate, $p_m$.  }
    \label{fig:fig4}
\end{figure*}

\section{Effect of magic on measurement-induced phase transition}

In recent years, there have been diverse and rich studies on the entanglement phase transition in random quantum circuits \cite{Li_2018,Li_2019,skinner_2019,Chan_2019,Jian_2020,Gullans_2020,Fisher_2023,Cao_2019,Bao_2020,Zabalo_2020,Tang_2020,Fuji_2020,Iaconis_2020,Lang_2020,Nahum_2021,Lavasani_Alavirad_Barkeshli_2021,Sang_2021,Lunt_2020,Szyniszewski_2020,Gullans_2021,Alberton_2021,Turkeshi_2020,Ippoliti_2021,Yaodong_2021,Fan_2021,Lavasani_2021,Regemortel_2021,Claeys_2022,Agrawal_2022,Block_2022,Noel_Niroula_Zhu_Risinger_Egan_Biswas_Cetina_Gorshkov_Gullans_Huse_etal._2022,Koh_Sun_Motta_Minnich_2023,Oshima_2023}. 
Interestingly, the detection of these phase transitions typically relies on the study of the behaviour of the mutual information in the random quantum circuit \cite{Oshima_2023,Ali_Ghorboon_2023}.
These studies mainly focused on two types of circuits. One type is the Clifford circuits, because of their scalability which is guaranteed by Gottesman-Knill theorem \cite{gottesman1998heisenberg,Aaronson_gottestman_2004} so they are widely used to study the phase transition more accurately \cite{Li_2018,Li_2019,Gullans_2020,Lavasani_2021,Sang_2021,Gullans_2021,Turkeshi_2020,Ippoliti_2021}. The second case is the Haar random circuits that are being used because of their universality in randomness. The possible drawback of using the Haar structures is that they are difficult to scale up and usually have been studied for small-scale systems of qubits \cite{skinner_2019,Ali_Ghorboon_2023,Chan_2019,Lunt_2020,Szyniszewski_2020}. In this section, we study specifically the effect of magic in entanglement phase transition. Specifically, the two research directions mentioned above either avoid non-stabilizer resources altogether or use Haar randomness. Here, we analyze magic injected into the entanglement phase transition circuits in a controlled way and study its effect through the lens of fluctuations analysis developed in Sec.~\ref{sec:2B}.

It is important to mention that, in parallel, there are ongoing studies in phase transitions in magic as well~\cite{bejan2023dynamical,niroula2024phase,fux2023entanglementmagic,gu2024magicinduced,Leone2024purity,tarabunga2024magictransitionmeasurementonlycircuits}. However the purpose of this section is to show the effect of magic solely on the entanglement phase transition identified by the mutual information.

The circuit structure for measurement-induced phase transition we use here inspired by Ref.~\cite{Li_2018,Li_2019,Gullans_2020,Lavasani_2021,Sang_2021,Gullans_2021,Turkeshi_2020} consists of two-qubit blocks of random Clifford gates on the neighbouring qubits. The block of random Clifford gates is generated using the canonical form for uniformly randomly distributed 2-Clifford gate described in \cite{Bravyi_2021}. Here, in order to inject magic in a controlled way, we follow the two-qubit random Clifford gates by a randomly chosen rotation gate from the set of $\{R_x(\theta),R_y(\theta),R_z(\theta)\}$. Afterwards, we induce the projective measurement on each qubit with probability $p_m$ in the z-basis. This process is followed by another set of random Cliffords this time on the odd pairs of qubits. Here, we introduce a periodic boundary condition by connecting the first and the last qubit. After performing Cliffords on odd pairs of qubit, we again follow by projective measurement (again with probability $p_m$). This process forms one cycle of the circuit, as the schematic structure of the circuit is shown in Fig. \ref{fig:fig3} (a) and is repeated $N_{\textit{cycles}}$ times.

We use the mutual information as defined in Eq.~\ref{eq:1} to identify the entanglement phase transition from area law entanglement to volume law entanglement \cite{Ali_Ghorboon_2023}. The partitions that we consider for mutual information are the first and third quarters of the whole system. First, we confirm that there is an entanglement phase transition present at all by checking the Clifford case with zero injected magic, i.e. $\theta=0$, for system sizes $N=\{8,12,16,20 \}$. We average the measured mutual information over $800$ instances and use the circuit depth of $N_{\textit{cycles}}=125$. Due to the small 
system size, we need to employ finite-size scaling to estimate the critical exponent, $\nu$, and the critical measurement rate, $p_{m,c}$. For this range of system sizes after finite-size scaling, we find that the critical measurement rate, $p_{m,c} = 0.19$ and the critical exponent, $\nu=1.25$ (see also Fig.~\ref{fig:fig3} (b) and (c)).

The next step is to inject magic into the system. By increasing the rotation angle to $\theta=\pi/4$, we inject magic into the system. We set the number of cycles and non-Clifford gates to $N_{cycle}=125$ to ensure we gradually saturate the non-stabilizerness in the random circuit. In order to study non-stabilizerness behaviour of an intermediate case, where magic is non-zero but also not maximal, we repeat the same numerical experiment, but with rotation angle $\theta=\pi/20$. 

We show the result of these simulations in the upper panels of Fig. \ref{fig:fig4}, where we plot mutual information, $I_{AB}$, as a function of measurement probability. $p_m$ for rotation angles $\theta\in\{0,\pi/20, \pi/4\}$. First, we notice that there is no difference in the critical measurement rate, $p_{m,c}$ nor in the critical exponent, $\nu$. However, when we analyze fluctuations of $I_{AB}$, $F_{AB}=(\mathbb{E}[(I^{(2)})^2]-\mathbb{E}[I^{(2)}]^2)^{1/2},$, we immediately observe a reduction in fluctuations with increasing non-stabilizerness. The consequence of this fluctuation reduction is better separation of data from different system sizes, which in turn allows for the phase transition point to be more easily identified from mutual information data.

\section{Discussion and Conclusions}

We have shown, both analytically and numerically, that for t-doped Clifford circuits, the fluctuations of the mutual information are proportional to the non-stabilizerness of the system. This observation creates a direct relation between fluctuations of entropic quantity and magic. In Appendix \ref{sec:analytics} we show that this behaviour is fundamentally different than that of Renyi-$4$ entropy (meaning that our observations are not merely a consequence of the fluctuation being a ``fourth-moment quantity''). The relation of magic to the fluctuation of quantum information quantity (OTOC) was previously observed in~\cite{ahmadi2024quantifying}, which is also related to quantum information scrambling. Additionally, similar fluctuations behaviour has already been observed for entanglement entropy \cite{True_2022,Piemontese_2023,OLIVIERO2021127721}, which could be related to a special case of mutual information. Specifically, when we have a pure state and the bi-partition spans the whole system, mutual information reduces to a scaled entanglement entropy.

We also observed that the injection of magic into the measurement-induced phase transition in random circuits decreases the fluctuations of mutual information around the entanglement phase transition, potentially simplifying the identification of this transition from data. It then of course depends on the experimental platform, whether additional rotation gates are feasible to implement. Since we observed that adding \emph{any} amount of magic is beneficial, presumably the phase of these gates would not need to be implemented with high precision, as long as the gate is outside of the Clifford group.

The main open question going forward is that of large-scale simulation of the random circuits with injection of the entanglement. These types of circuits present a particular challenge for approximate methods: they require relatively long time evolution as well as rapid entanglement growth. One candidate that could possibly go beyond these limitations is the Neural Network Quantum States (NQS) \cite{Wu_2020,Wang_2020} which could be an interesting future research direction.

\section{Author contributions}
AA conceived the project with input from EG. AA wrote the code for mutual information fluctuation analysis, analyzed the data and created the figures. CK wrote the code and created figures for the entanglement phase transition analysis with the help of AA and EG. JH derived the analytical proof of the information fluctuation theorem. AA, JH, and EG wrote the manuscript. EG supervised the project.

\section{Data availability}
A GitLab repository containing this project is available at~\cite{fluctutation_code}. All the data and code to analyze them is available at~\cite{zenodo_code_data}. For the simulations of this paper, we used Qiskit and Cirq~\cite{gadi_aleksandrowicz_2019_2562111,cirq_developers_2021_5182845} simulators.

\section{Acknowledgements}
We acknowledge useful discussions with Thomas E. Spriggs, Mohammed Boky, Ana Silva and Ali G. Moghaddam.
This work is part of the project Engineered Topological Quantum Networks (Project No.VI.Veni.212.278) of the research program NWO Talent Programme Veni Science domain 2021 which is financed by the Dutch Research Council (NWO). JH acknowledges funding from the Dutch Research Council (NWO) through Veni No.VI.Veni.222.331 and the Quantum Software Consortium (NWO Zwaartekracht Grant No.024.003.037).

\bibliographystyle{apsrev.bst}
\bibliography{bibliography}
\onecolumngrid

 \newpage
\appendix
\section{Fourth moment of spin measurements}\label{sec:kurtosis}

Inspired by the relation between measuring spin fluctuations of a subsystem and entanglement entropy \cite{Ali_Ghorboon_2022} and mutual information \cite{Ali_Ghorboon_2023,pöyhönen2024scalableapproachmonitoredquantum}, we observed that the fourth moment (Kurtosis) of spin measurement of the subsystem also relates to non-stabilizerness on t-doped circuits.

We considered the circuit of the same structure as in Fig. \ref{fig:fig1}. Instead of measuring entanglement entropy, $S^{(2)}$, of subsystems $A$ and $B$, we measured the total spin of those subsystems, $S_z = \sum_{n \in N_l }s_{n,z}$ where $l \in \{A,B\}$. We define $\textit{Kurt}(S_z)_{A,B}$ as

\begin{equation}
    \textit{Kurt}(S_z)_{A,B} :=\textit{Kurt}(\langle S_z\rangle_{A})+\textit{Kurt}(\langle S_z\rangle_{B}) 
    -\textit{Kurt}(\langle S_z\rangle_{AB}).
\end{equation}

The Kurtosis of a random variable $X$ is the standardized fourth moment, defined as, $\frac{\mathbb{E}[(X-\mu)^4]}{\sigma^4}$ where $\mu$ is the mean and $\sigma$ is the standard deviation.

The results of the simulation in Fig. \ref{fig:figAppendixA} show that there is a linear trend in $-\ln \textit{Kurt}(S_z)_{A,B}$ as a function of the number of T-gates in the circuit. In this case, the number of T gates, $N_T$, and the number of samples are both 500. This linear trend depending on the number of qubits, can be observed for an effective number of T-gates. As it is shown in Fig. \ref{fig:figAppendixA}, for an 8 qubit circuit, this effect can be observed from $N_T=0$ while for a 12 qubit circuit, the effective number of T-gates is $N_T = 5$ and for 16 qubit circuit, it is $N_T = 12$.

One possible explanation for the behaviour of small $N_T$ in Fig. \ref{fig:figAppendixA} for not following the linear trend, could be that for measuring spins in each subsystem for Clifford circuits, we have
\begin{equation}
    \langle S_{z,j} \rangle = \Tr(\sigma_{z,j} \rho)=\sum_ic_i\Tr(\sigma_{z,j} P_i),
    \label{eq:appendixC1}
\end{equation}
given the fact that $\Tr (P_iP_j)=\delta_{ij}$ \cite{Gambetta_2012}, so that we obtain
\begin{equation}
    \langle S_{z,j} \rangle= c_j.
    \label{eq:appendixC2}
\end{equation}

Suppose the Clifford circuit is in the scrambling regime $c_j \approx O(4^{-N})$ for $N\gg1$, $c_j\ll1$. Given all instances of spin measurements being a small value, their kurtosis also becomes small and $-\ln \textit{Kurt} (S_z) >1$. So we expect the linear behaviour of $-\ln \textit{Kurt}(S_z)_{A,B}$ as a function of $N_T$ to fail for the small $N_T$ with respect to the system size. 

\begin{figure}
    \centering
    \includegraphics[width=0.5\linewidth]{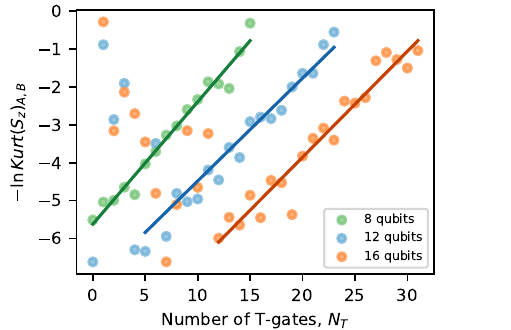}
    \caption{The log of kurtosis of the spin measurements, $-\ln \textit{Kurt}(S_z)_{A,B}$ as a function of the number of T-gates, $N_T$, on the circuit for the number of qubits, $N=8,12,16$ and the number of samples is 500.}
    \label{fig:figAppendixA}
\end{figure}

\section{Analytical treatment of entropy fluctuations}\label{sec:analytics}
In this Supplementary we state a refined version of Eq. (2) in the main text, and prove it. We will make use of the "super-ket" notation: denoting density matrices $\rho$ as $\kett{\rho}$ and observables $E$ as $\braa{E}$, with the trace inner product $\braakett{E}{\rho} = \Tr(E\ct\rho)$. Quantum channels $\Lambda$ act linearly: $\Lambda \kett{\rho} = \kett{\Lambda(\rho)}$. For unitary channels we emphasise the difference between channel and unitary with caligraphic letters, i.e. $\kett{U\rho U\ct} = \mc{U}\kett{\rho}$.  Before we move on to the proof we briefly review relevant representation theory for the unitary and Clifford groups. 

\subsection{Moments of the unitary and Clifford groups}\label{subsec:unitary}
We review the theory of polynomial invariants of the unitary group and the Clifford group. We will focus on quoting results required for the proof. For more detailed explanations see \cite{collins2021weingarten}(unitary group) and \cite{zhu2016clifford,helsen2018representations,gross2021schur,montealegre2021rank}(Clifford group). We begin by discussing the unitary group, and then make adaptations where needed for the Clifford group (focusing in particular on the case of quartic polynomials).
The polynomial invariants of degree $t$ of the unitary group are captured collectively by the \emph{$t$-th moment (super)operator}, which is defined by
\begin{equation}\label{eq:moment op unitary}
  \mathcal M^{U(2^N),(t)} = \int_{U(2^n)} dU \,\mc{U}^{\ot t}.
\end{equation}
This operator (via Schur-Weyl duality) can be expressed in terms of permutation operators: for $\pi \in S_{t}$ we define
\begin{align*}
  R_\pi = r_\pi^{\ot N}
\quad\text{where}\quad
  r_{\pi} = \sum_{x \in \{0,1\}^t} \ket{x_{\pi(1)},\ldots, x_{\pi(t)}} \bra{x_1, \ldots x_t}.
\end{align*}
The moment operator is a projector onto the space spanned by these permutation operators, and we can write
\begin{align}\label{eq:moment via weingarten unitary}
  \mathcal{M}_t = \sum_{\pi',\pi \in S_t} W^{U(2^N), (t)}_{\pi',\pi} \kett{R_{\pi'}}\braa{R_\pi},
\end{align}
where $W^{U(2^N), (t)}_{\pi',\pi} $ is the so-called \emph{Weingarten} matrix, the (pseudo) inverse of the Gram matrix $ G^{U(2^N),(t)}_{\pi,\pi'}:=\braakett{R_\pi}{R_{\pi'}} $ of the permutation operators. A key fact about the Weingarten matrix is that it is diagonally dominant for fixed $t$ and large $n$, we have:
\begin{align}\label{eq:weingarten bound unitary}
  W^{U(2^N),(t)} = 2^{-tN} \parens*{ I + 2^{-N} F },
\end{align}
where~$F$ is a matrix with bounded (as a function of $n$) entries. In particular for $t=4$ we have $\norm{F}\leq 16$.\\

Next we consider the Clifford group. The moment operator has an analogous but more complicated expression, which we will only discuss in detail for $t=4$. We have:
\begin{equation}\label{eq:moment op cliff}
  \mathcal M^{\mathbb C_N,(4)} = \frac{1}{\abs{\mathbb{C}_N}} \sum_{C \in\mathbb{C}_N} \mc{C}^{\ot 4}.
\end{equation}
This operator differs from that of the Haar measure, as the Clifford group is \emph{not} a $4$-design.
For~$n\geq 3$, the dimension of its image is $30$, and a basis is given by the~$24$ permutation operators~$R_\pi$, parameterized by~$\pi \in S_4$,  and six more operators~$R_T$.
The latter can be written in the form~$R_{\hat{\pi}} \Pi_4$, where $\hat{\pi}$ ranges over the subgroup~$S_3 \subseteq S_4$ of permutations acting on the final three subspaces. and
\begin{align}\label{eq:Pi_4}
  \Pi_4 := 2^{-n}(I\tn{4} + X\tn{4} + Y\tn{4} + Z\tn{4})\tn{n}=: \pi_4\tn{n}.
\end{align}
 We will denote this set of six operators as $\hat{S}_3 = \{\pi_4,\pi_4\cdot (23),\pi_4\cdot (34),\pi_4\cdot (24), \pi_4\cdot (234),\pi_4\cdot (324)\}$ where we use the dot to emphasise multiplication. Following convention we will denote the total set of $30$ operators as $\Sigma_{4,4} = S_4\cup \hat{S}_3$
In terms of these operators the moment operator~\eqref{eq:moment op cliff} is given by
\begin{align}\label{eq:moment via weingarten clifford}
  \mathcal{M}^{\mathbb C_N,(4)} = \sum_{T',T \in \Sigma_{4,4}} W^{\mathbb C_N,(4)}_{T',T} \kett{R_{T'}}\braa{R_T}.
\end{align}
where $W^{\mathbb C_N,(4)}$ is the \emph{Clifford-Weingarten matrix}.
We note that $W^{\mathbb C_N,(4)}$ is also diagonally dominant for large~$N$. In particular for $t=4$ we have
\begin{align}\label{eq:weingarten bound}
  W^{\mathbb C_N,(4)}= 2^{-4N} \parens*{ I + 2^{-N} F },
\end{align}
where $\norm{F} \leq 16$.

Finally we will need a lemma from \cite{helsen2023thrifty} that characterises the action of the $T$-gate on the commutant of the Clifford group:
\begin{lemma}\label{lem:t_gate}
Let $\mc{T}$ denote the quantum channel acting by the $\T$-gate 
$\T = \begin{psmallmatrix} 1& 0 \\ 0 & e^{i\pi/4}\end{psmallmatrix}$ 
on the first qubit of an $N$-qubit state.
Then we have, for every $\pi,\pi'\in\hat{S}_3$, that
\begin{align*}
  \braa{R_\pi \Pi_4} \mc{T}^{\ot4} \kett{R_{\pi'} \Pi_4} \begin{cases}
  =(2^4 - 4) 2^{4(N-1)} = \frac34 2^{4N} & \text{ if  }\; \pi=\pi', \\
  \leq (2^3-4) 2^{3(N-1)} = \frac12 2^{3N} & \text{ if  }\; \pi\neq \pi'.
\end{cases}
\end{align*}
\end{lemma}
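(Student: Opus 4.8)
\emph{Plan and factorization.} The plan is to exploit the product structure of every object involved to collapse the $4N$-qubit matrix element into two single-qubit (four-replica) traces: one ``bare'' trace on each untouched qubit and one ``dressed'' trace carrying the $\T$-gate on the first qubit. Throughout, $\pi,\pi'$ range over the copy $\hat S_3\cong S_3$ of permutations fixing the first replica, and I set $\sigma:=\pi^{-1}\pi'$. I would write the matrix element as $\Tr[(R_\pi\Pi_4)\ct\,\mc{T}\tn4(R_{\pi'}\Pi_4)]$ and use $R_\pi\Pi_4=(r_\pi\pi_4)\tn N$ together with the fact that $\mc{T}$ is nontrivial only on the first qubit. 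Since $\pi_4\ct=\pi_4$ and $r_\pi\ct=r_{\pi^{-1}}$, the trace factorizes as
\begin{equation*}
\braa{R_\pi\Pi_4}\,\mc{T}\tn4\,\kett{R_{\pi'}\Pi_4}=m_1\,g^{\,N-1},
\end{equation*}
where $g=\Tr[(r_\pi\pi_4)\ct(r_{\pi'}\pi_4)]$ is the untouched single-qubit overlap and $m_1=\Tr[(r_\pi\pi_4)\ct\,\mc{T}\tn4(r_{\pi'}\pi_4)]$ is the dressed one, both depending on $(\pi,\pi')$ only through $\sigma$ because $\pi_4$ is invariant under replica permutations and hence commutes with every $r_\sigma$.

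\emph{The bare factor.} Next I would compute $g$. The set $\{I\tn4,X\tn4,Y\tn4,Z\tn4\}$ is, up to phases, the Klein four-group, which gives $\pi_4^2=2\pi_4$, so $g=\Tr[\pi_4\,r_\sigma\,\pi_4]=2\Tr[\pi_4 r_\sigma]$. Expanding $\pi_4$ and evaluating $\Tr[P\tn4 r_\sigma]$ by its cycle decomposition (a product of factors $\Tr[P^{k}]$, equal to $2$ for $P=I$ or even $k$ and $0$ for $P\in\{X,Y,Z\}$ on odd $k$), the key point is that $\sigma$ always fixes the first replica, so its fixed-point cycle kills every Pauli term and only $I\tn4$ survives, giving $g=2^{c(\sigma)}$ with $c(\sigma)$ the number of cycles. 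This yields $g=16$ for $\sigma=e$ (i.e.\ $\pi=\pi'$), $g=8$ for a transposition, and $g=4$ for a three-cycle, so $g\le8$ whenever $\pi\neq\pi'$.

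\emph{The dressed factor.} The crux is $m_1$. Because $\T\tn4$ is a symmetric tensor power it commutes with all $r_\sigma$, reducing $m_1$ to $\Tr[\big((\T\ct)\tn4\pi_4\,\T\tn4\big)\pi_4\,r_\sigma]$. Expanding both $\pi_4$'s in the Pauli basis and substituting the rotation identities $\T\ct I\T=I$, $\T\ct Z\T=Z$, $\T\ct X\T=\tfrac1{\sqrt2}(X-Y)$, $\T\ct Y\T=\tfrac1{\sqrt2}(X+Y)$, each summand becomes $\Tr[M\tn4 r_\sigma]$ with $M=(\T\ct P\T)P'$, again a product of cycle traces $\Tr[M^{k}]$. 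For $\sigma=e$ this gives $m_1=\tfrac14\sum_{P,P'}(\Tr M)^4=\tfrac14(16+16+4+4+4+4)=12=2^4-4$, and with $g=16$ this reproduces $\tfrac34\,2^{4N}$ exactly. For $\sigma\neq e$ only the six pairs with $\Tr M\neq0$ contribute; computing $\Tr[M^2]$ and $\Tr[M^3]$ shows the four rotated-Pauli pairs vanish (since $\Tr[M^2]=0$), leaving $m_1=2^3-4=4$ for a transposition and $m_1=0$ for a three-cycle, so $m_1 g^{N-1}\le4\cdot8^{N-1}=\tfrac12\,2^{3N}$.

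\emph{Main obstacle.} The main difficulty is precisely the dressed factor: inserting the $\T$-gate breaks the permutation symmetry that trivializes the bare overlap, so one must carry the rotated Paulis $\T\ct P\T$ through the cycle bookkeeping by hand. Moreover the off-diagonal bound is nonuniform across conjugacy classes -- it is saturated by transpositions but is strictly zero for three-cycles -- so I would organize the $\sigma\neq e$ argument as a finite case check over $S_3$, verifying that after multiplying by $g^{\,N-1}$ the transposition is the binding case and $2^3-4$ dominates every class simultaneously.
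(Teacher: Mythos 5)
Your derivation is correct, but there is nothing in the paper to compare it against: the paper does not prove Lemma~\ref{lem:t_gate} at all, it imports it verbatim from Ref.~[helsen2023thrifty] (``we will need a lemma from\ldots''). Your proposal therefore supplies a self-contained proof where the paper gives none. The argument is sound at every step I checked: the factorization $\braa{R_\pi\Pi_4}\mc{T}^{\ot4}\kett{R_{\pi'}\Pi_4}=m_1 g^{N-1}$ using $R_\pi\Pi_4=(r_\pi\pi_4)^{\ot N}$ and locality of $\mc T$; the identities $\pi_4^2=2\pi_4$, $[\pi_4,r_\sigma]=0$, $[\T^{\ot4},r_\sigma]=0$; the observation that the fixed point of $\sigma\in S_3$ at the first replica kills all non-identity Pauli terms in the bare overlap, giving $g=2^{c(\sigma)}\in\{16,8,4\}$; and the Pauli-basis evaluation of the dressed factor via $\T\ct X\T=\tfrac1{\sqrt2}(X-Y)$, $\T\ct Y\T=\tfrac1{\sqrt2}(X+Y)$, which indeed gives $m_1=12$ for $\sigma=e$ and $m_1=4$ for a transposition, reproducing $\tfrac34 2^{4N}$ and $\tfrac12 2^{3N}$ exactly. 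One small imprecision: your parenthetical ``the four rotated-Pauli pairs vanish (since $\Tr[M^2]=0$)'' applies only to the transposition class, whose cycle product contains a $\Tr[M^2]$ factor; for a three-cycle the product is $\Tr[M]\Tr[M^3]$ and each rotated pair contributes $-2$, so $m_1=\tfrac14(4+4-2-2-2-2)=0$ arises by cancellation, not by termwise vanishing. Since the final value $m_1=0$ is what you state, and since any value $\le 4$ would satisfy the claimed bound after multiplying by $g^{N-1}=4^{N-1}$, this does not affect the conclusion, but the stated reason should be corrected if this is written up.
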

\subsection{Main theorem}
With the necessary representation theory reviewed we can prove the relation in Eq. $2$ in the main text. For precision's sake we restate the result as a theorem. 
\begin{theorem}
Let $\psi_{N_T}$ be the $N$-qubit state generated by the application of $N_T$ $T$ gates (on a fixed qubit) interspersed with random $N$-qubit Clifford gates, and let $\rho_{AB}$ be its reduced state on $N/2$ qubits. Averaged over the random Clifford gates we have
\begin{align}
\tilde{S}_{AB}^{(2)} &:= -\log\big(\mathbb{E}\Tr(\rho_{AB}^2)\big)  =  1 + N/2 + O(2^{-N})\\
\tilde{\delta} &:=\big(-\log\big(\mathbb{E}\Tr(\rho_{AB}^2)^2\big) -(\tilde{S}_{AB}^{(2)})^2\big) = \bigg(\frac{3}{4}\bigg)^{N_T} + O(2^{-N}).
\end{align}
\end{theorem}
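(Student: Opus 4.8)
\emph{Setup.} The plan is to express both moments through the $t$-copy moment superoperators and reduce everything to linear algebra on the Clifford commutant. Writing $\rho_0 = \dens{0^N}$ for the input and letting $\mathbb{S}_{AB}$ denote the swap on two copies of the kept subsystem, one has $\Tr(\rho_{AB}^2) = \braakett{\mathbb{S}_{AB}\ot \1_{\overline{AB}}}{\rho^{\ot2}}$, so that $\mathbb{E}\Tr(\rho_{AB}^2)$ is a second moment and $\mathbb{E}\Tr(\rho_{AB}^2)^2$ a fourth moment of the circuit ensemble (with boundary observable $\braa{\mathcal{O}}$ a product of two such swap pairs). Since the $N_T+1$ Clifford blocks are drawn independently, the averaged $t$-fold channel factorizes into an alternating product of Clifford moment operators and $\T$-gate layers,
\begin{equation*}
\mathbb{E}\,\mc{U}_{\mathrm{circ}}^{\ot t} = \mathcal{M}^{\mathbb{C}_N,(t)}\,\mc{T}^{\ot t}\,\mathcal{M}^{\mathbb{C}_N,(t)}\cdots \mc{T}^{\ot t}\,\mathcal{M}^{\mathbb{C}_N,(t)},
\end{equation*}
with $N_T$ interspersed $\mc{T}^{\ot t}$ factors. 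Both moments then become boundary vectors contracted against a power of the transfer operator $\mathcal{M}^{\mathbb{C}_N,(t)}\mc{T}^{\ot t}$ restricted to the commutant.

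\emph{First line.} For $t=2$ the Clifford group is a $2$-design, so $\mathcal{M}^{\mathbb{C}_N,(2)} = \mathcal{M}^{U(2^N),(2)}$ projects onto the two permutation operators $R_{\1}, R_{(12)}$. These lie in the commutant of $U(2^N)^{\ot2}$ and are therefore fixed by the unitary channel $\mc{T}^{\ot2}$, so every $\T$-layer acts trivially and $\mathbb{E}\Tr(\rho_{AB}^2)$ collapses to its $N_T$-independent Haar value. A direct Weingarten evaluation gives the Haar purity $\tfrac{2^{N/2}+2^{N/2}}{2^N+1}$, whose negative logarithm is linear in $N$, independent of $N_T$, and correct up to $O(2^{-N})$, establishing the first identity.

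\emph{Second line.} The content of the theorem sits in the $t=4$ moment, where the Clifford group fails to be a $4$-design and $\mathcal{M}^{\mathbb{C}_N,(4)}$ projects onto the $30$-dimensional commutant $\Sigma_{4,4} = S_4 \cup \hat{S}_3$ of \eqref{eq:moment via weingarten clifford}. The $24$ permutation operators $R_\pi$ ($\pi\in S_4$) again commute with $\mc{T}^{\ot4}$ and are annihilated of any $\T$-dependence; together with the projector fixing the commutant they are genuine fixed points of the transfer operator and reproduce the disconnected background $\approx (\mathbb{E}\Tr\rho_{AB}^2)^2$. All $N_T$-dependence is carried by the six ``magic'' operators $\hat{S}_3=\{R_{\hat\pi}\Pi_4\}$ of \eqref{eq:Pi_4}, on which $\mc{T}^{\ot4}$ acts nontrivially. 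By Lemma~\ref{lem:t_gate} this action is diagonally dominant: each diagonal matrix element equals $\tfrac34\,2^{4N}$ while the off-diagonal ones are smaller by a factor $\sim 2^{-N}$. After normalizing by the Gram/Weingarten weight $\sim 2^{4N}$ of \eqref{eq:weingarten bound}, one $\T$-layer multiplies the $\hat{S}_3$ sector by $\tfrac34$ up to $O(2^{-N})$; iterating over $N_T$ layers produces the overall factor $(3/4)^{N_T}$. Hence $\mathbb{E}\Tr(\rho_{AB}^2)^2 = (\mathbb{E}\Tr\rho_{AB}^2)^2\big(1 + c\,(3/4)^{N_T} + O(2^{-N})\big)$ with an $\Theta(1)$ coefficient $c$ set by the overlap of the boundary vectors with $\hat{S}_3$, and feeding this into the definition of $\tilde\delta$ isolates the claimed $(3/4)^{N_T} + O(2^{-N})$.

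\emph{Main obstacle.} The hard part is the bookkeeping of the $4$th-moment transfer operator. One must (i) use the Clifford--Weingarten matrix and the bound \eqref{eq:weingarten bound} to show that the non-design corrections, the off-diagonal $\hat{S}_3$ mixing of Lemma~\ref{lem:t_gate}, and the $S_4$--$\hat{S}_3$ cross terms induced by the non-orthogonality of $\Sigma_{4,4}$ all enter only at $O(2^{-N})$ and do not spoil the clean $\tfrac34$ per-layer contraction; (ii) evaluate the boundary overlaps $\braa{\mathcal{O}}$ and $\kett{\rho_0^{\ot4}}$ against both sectors, verifying the magic sector couples with $\Theta(1)$ weight so that the $(3/4)^{N_T}$ term survives; and (iii) subtract the permutation background cleanly against the first-moment square. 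I expect step (i) --- controlling the interplay between the off-diagonal couplings and the $2^{-N}$ Weingarten tail compounded across $N_T$ iterations --- to be the principal technical burden. As the main text notes, this diagonal analysis pins down the rate $\ln(4/3)$ but, as with the entanglement velocity, only a full replica treatment would recover the empirically observed slope.
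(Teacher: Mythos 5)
Your proposal follows essentially the same route as the paper's proof: the superket/transfer-operator formulation, the $2$-design argument absorbing the $\T$-layers for the first identity, and for the second the decomposition of $\mathcal M^{\mathbb C_N,(4)}$ over $\Sigma_{4,4}=S_4\cup\hat S_3$ with Lemma~\ref{lem:t_gate} giving the $\tfrac34$ contraction per $\T$-layer and diagonal dominance of the Clifford--Weingarten matrix controlling the corrections, exactly as in the paper's contraction $v\ct(WQ)^{N_T}Wu$. The only detail you defer that the paper makes explicit is that the $\Theta(1)$ coupling of the boundary swap vector to the magic sector comes from the single generator $\pi_4\cdot(34)$ and holds precisely because $|AB|=N/2$.
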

\begin{proof}
We begin by calculating $\tilde{S}_{AB}^{(2)} $. We have
\begin{align}
\mathbb{E}\Tr(\rho_{AB}^2)\big) = \frac{1}{|\mathbb{C}_N|} \sum_{C_0,\ldots, C_{N_T}}
\braa{\mathbb{F}_{AB}\otimes \mathbb{I}_{\overline{AB}} } 
\mc{C}_{N_T}\tn{2}\mc{T}\tn{2} \cdots \mc{T}\tn{2} \mc{C}_0\tn{2}\kett{0\tn{2}},
\end{align}
where $\mathbb{F}$ is the permutation operator exchanging two copies of the $AB$ subsystem. Here we used the trace identities $\tr(A)^2 = \tr(A\tn{2})$ and $\tr(A^2) = \tr(\mathbb{F}A\tn{2})$. We can simplify this equation by noting that the Clifford group is a $2$ design, and hence
\begin{equation}
 \frac{1}{\abs{\mathbb{C}_N}} \sum_{C \in\mathbb{C}_N} \mc{C}^{\ot 2} = \mathcal M^{U(2^N),(2)},
 \end{equation}
 where $\mathcal M^{U(2^N),(2)}$ is the quadratic moment operator of the \emph{unitary group}. By Haar invariance, the action of $\mc{T}\tn{2}$ is easily seen to be absorbed, and since $\mathcal M^{U(2^N),(2)}$ is a projector, we have 
 \begin{equation}
 \mathbb{E}\Tr(\rho_{AB}^2) =\braa{\mathbb{F_{AB}}}\mathcal M^{U(2^N),(2)} \kett{0\tn{2}}.
 \end{equation}
This already shows that $\tilde{S}_{AB}^{(2)} $ is independent of $N_T$. We can calculate the associated value exactly by using the well known formula (see e.g. \cite{gross2021schur})
\begin{equation}
\mathcal M^{U(2^N),(2)} \kett{0\tn{2}} = \frac{1}{2^N(2^N+1)} \sum_{\pi\in S_2}\kett{R_\pi}.
\end{equation}
Using the fact that $\mathbb{F}_{AB} = r_{(12)}\tn{N_{AB}}$ and $R_\pi = r_{\pi}\tn{N}$ separate across qubits we can now calculate
\begin{equation}
 \mathbb{E}\Tr(\rho_{AB}^2) = \frac{1}{2^N(2^N+1)} \sum_{\pi\in S_2}\tr(\mathbb{F} r_{\pi})^{N_{AB}} \tr(r_{\pi})^{N - N_{AB}}.
 \end{equation}
By direct calculation we have $\tr(r_{(12)}) = 2$ and $\tr(r_{e}) = 4$. With a little calculus we thus get
\begin{equation}
 \mathbb{E}\Tr(\rho_{AB}^2) =\frac{2^{N_{AB}} 4^{N-N_{AB}} + 2^{N- N_{AB}}4^{N_{AB}} }{2^N(2^N+1)}.
 \end{equation}
 Using $N_{AB} = N/2$ we can can see that 
 \begin{equation}
 \mathbb{E}\Tr(\rho_{AB}^2) = 2 \, 2^{-N/2} + O(2^{-N}).
 \end{equation}
Calculating the variance term $\mathbb{E}\Tr(\rho_{AB}^2)\big)^2 $ is similar but messier. Again through trace identities we obtain
\begin{equation}
\mathbb{E}\Tr(\rho_{AB}^2)\big)^2  = \frac{1}{|\mathbb{C}_N|^{N_T+1}} \sum_{C_0,\ldots, C_{N_T}}
\braa{(\mathbb{F}_{AB}\otimes \mathbb{I}_{\overline{AB}})\tn{2} } 
\mc{C}_{N_T}\tn{4}\mc{T}\tn{4} \cdots \mc{T}\tn{4} \mc{C}_0\tn{4}\kett{0\tn{4}}.
\end{equation}
Since the Clifford group is not a $4$-design we can no longer simplify this expression. Instead we directly insert
\begin{equation}
 \frac{1}{\abs{\mathbb{C}_N}} \sum_{C \in\mathbb{C}_N} \mc{C}^{\ot 4} = \mathcal M^{\mathbb{C}_N,(4)} = \sum_{T',T \in \Sigma_{4,4}} W^{\mathbb C_n,(4)}_{T',T} \kett{R_{T'}}\braa{R_T}.
 \end{equation}
 Defining the matrix
 \begin{equation}
 Q_{T,T'} = \braa{R_T}\mc{T} \kett{R_{T'}},
 \end{equation}
 and the vectors 
 \begin{align}
 v_T &= \braakett{(\mathbb{F}_{AB}\otimes \mathbb{I}_{\overline{AB}})\tn{2} }{R_T}\\
 u_T &= \braakett{R_T}{0\tn{4}},
\end{align}
we can write the above expression as a matrix vector inner product
\begin{equation}
\mathbb{E}\Tr(\rho_{AB}^2)\big)^2  = v\ct WQW\ldots QW u = v\ct (WQ)^{N_T} Wu.
\end{equation}
At this point it is worth noting that $u$ is the all ones vector, and that the vector $v$ has the property
\begin{equation}
v_{T} \;\;\;\;\;\;\begin{cases} &=  2^{3N}\hspace{2em} \text{    if    } T \in \{ e, (12)(34), (12),(34), \pi_4\cdot(34)\},\\
&\leq 2^{5N/2}\hspace{1.2em} \text{    otherwise},
\end{cases}
\end{equation}
where we used explicitly that $N_{AB} = N/2$ and that $\mathbb{F}_{AB}\tn{2} = r_{(12)(34)}\tn{N_{AB}}$.
Next we make some approximations valid in the large $N$ regime. We will use big O notation to suppress constants that can in principle be calculated. We have that
\begin{equation}
W = 2^{-4n}I + O(2^{-5N})
\end{equation}
and (via lemma $1$) that
\begin{equation}
Q = 2^{4N}\begin{pmatrix} I_{24} & 0 \\ 0& \frac{3}{4}I_6\end{pmatrix} + O(2^{3N}).
\end{equation}
Consequently we have 
\begin{align}
v\ct (WQ)^{N_T}Wu &= 2^{-4N} v\ct \begin{pmatrix} I_{24} & 0 \\ 0 &\frac{3}{4}I_6\end{pmatrix} u +  O(2^{-2N})\\
&=2^{-4N}\bigg(4\, 2^{3N} + \bigg(\frac{3}{4}\bigg)^{N_T}2^{3N}\bigg)+  O(2^{-2N})\\
&=\bigg(4+ \bigg(\frac{3}{4}\bigg)^{N_T}\bigg) 2^{-N} +  O(2^{-2N}),
\end{align}
which finishes this part of the calculation.
With this in hand we can compute $\tilde{\delta}$:
\begin{equation}
\tilde{\delta} = \big(-\log\big(\mathbb{E}\Tr(\rho_A^2)^2\big) -(\tilde{S}_A^{(2)})^2\big) \approx  \log\bigg(1+\bigg(\frac{3}{4}\bigg)^{N_{T}}  O(2^{-N})\bigg)  \approx  \bigg(\frac{3}{4}\bigg)^{N_{T}} + O(2^{-N}),
\end{equation}
where the last approximation is the first order of the Taylor expansion of $\log(1+x)$.
\end{proof}

\begin{figure}[h!]
    \centering    \includegraphics[width=0.5\linewidth]{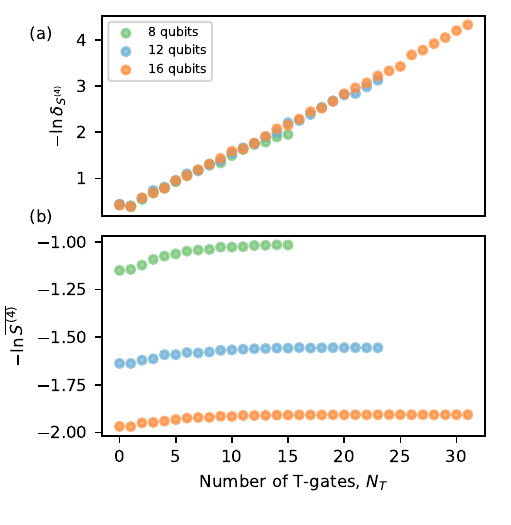}
    \caption{(a)  The log of fluctuations of Renyi-$4$ entropy, $-\ln \delta_{S^{(4)}}$ as a function of the number of T-gates, $N_T$,on the circuit. (b)The log of the average of Renyi-$4$ entropy, $-\log \overline{S^{(4)}}$ as a function of the number of T-gates, $N_T$ on the circuit. The number of qubits, $N=8,12,16$ and the number of samples is 500.}
    \label{fig:figS4}
\end{figure}
\begin{figure}
    \centering
    \includegraphics[width=0.5\linewidth]{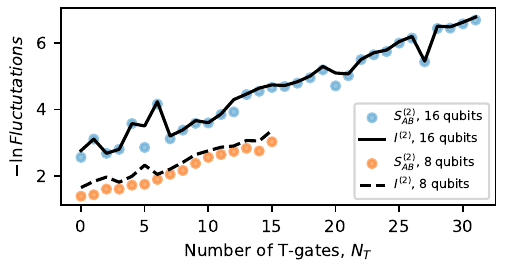}
    \caption{  The log of fluctuations of mutual information $-\ln \delta_{I^{(2)}}$ and Renyi-$4$ entropy, $-\ln \delta_{S^{(2)}}$ as a function of the number of T-gates, $N_T$,on the circuit for the subsystem size $N_A=N_B=N/8$. The number of qubits, $N=8,16$ and the number of samples is 500.}
    \label{fig:figN8}
\end{figure}

The key detail of this calculation is the extra term appearing at order $2^{3N}$ in the expression of the vector $v$. This term is due to the generator $\pi_4 \cdot(34)$ which is present in the commutant of the Clifford group and not that of the unitary group. We also see that due to lemma $1$ it vanishes quickly with increasing $N_T$, as we also observe numerically. It is important to note that this is not merely a consequence of the fluctuation of the Renyi-$2$ entropy being a quartic invariant. In other quartic invariants such as the Renyi-$4$ entropy, a similar calculation can be made but there the contributions from the non-permutation generators are all suppressed by a factor of $2^{-N}$ making them invisible even at moderate qubit numbers. We confirm this observation numerically and show the results in Fig. \ref{fig:figS4} (b): The averaged instances of Renyi-$4$ entropy do not show the linear behaviour, however, the fluctuations of Renyi-$4$ entropy do in Fig. \ref{fig:figS4} (a) showing the behavior is encoded in the fluctuations, not in the order of Renyi entropy. Similarly, it is vital that $|AB| = N/2$. If $|AB|$ is substantially smaller or larger than this (e.g. $|AB|=N/8$), then the contribution of the generator $\pi_4 \cdot(34)$ becomes subleading even in the fluctuation of the Renyi-$2$ entropy. A numerical illustration of how different subsystem sizes influence the result is shown in Fig. \ref{fig:figN8} for the subsystem size of $|AB|=N/4$, where we see that a gap opened for $8$-qubit and $16$-qubit system sizes, thus making the proportionality between mutual information fluctuations and the number of T-gates system size dependent.

\end{document}